\renewcommand{\baselinestretch}{1.5}
\def\argmax{\mathop{\rm argmax}\limits}
\def\max{\mathop{\rm max}\limits}
\begin{document}
\bibliographystyle{apsr}
\renewcommand{\baselinestretch}{1.5}

\title{Accuracy Is (Generically) Bad For Compliance\thanks{We thank Lauren Klein, Kevin Quinn, and Alexander Tolbert for very helpful conversations.}}

\author{John W. Patty\thanks{Professor of Political Science and Quantitative Theory \& Methods, Emory. Email: \textit{jwpatty@gmail.com}.} \, and Elizabeth Maggie Penn\thanks{Professor of Political Science and Quantitative Theory \& Methods, Emory. Email: \textit{elizabeth.m.penn@gmail.com}.}}

\date{\today}

\maketitle

\begin{abstract}
    We demonstrate that the set of cost distributions under which the optimal strategy for maximizing compliance (or more generally, effort) in a binary choice environment is identical to the optimal strategy for maximizing the accuracy of the reward (minimizing Type-I and Type-II errors) is \textit{finitely shy} (\cite{AndersonZame01}) in the space of all smooth parameterized real-valued distributions possessing full support on $\mathbf{R}$.  In words, this implies that maximizing compliance and maximizing accuracy ``almost always'' call for different incentive schemes.
\end{abstract}

It is well-known that there is a distinction between a principal attempting to use a material reward to provide maximal incentive for an agent to ``work hard'' versus attempting to reward the agent as accurately as possible in the sense of the agent getting a reward if and only if the agent actually worked hard.  However, there are also many ``workhorse'' models of principal-agent situations in which the two strategies coincide.

In many of these models, these optimal strategies coincide at least in part because of the timing of the principal's decision about whether to reward the agent.  For example, some of the situations captured by the (simple, but abstract) model we consider below, the principal's optimal strategy is identical under the two motivations \textit{because the principal makes the decision after the agent has already chosen his or her effort when the principal makes his or her reward decision}.  In such situations, the notion of \textit{sequential rationality} often implies that the principal should reward any agent whose efforts produced an informative signal implying the probability the agent exerted effort is sufficiently close to 1.  

Accordingly, we consider an alternative timing in this article: the principal can publicly commit to a reward strategy prior to the agent's choice of effort level.  It is important to note at this point that \textit{the principal always weakly benefits from having the ability to commit}.  Accordingly, while it is often an unrealistic assumption in practice, it is theoretically compelling, even if only because this setting increases the principal's ability to influence the agent's choice.\footnote{A timely and close analogy to this juxtaposition can be seen in the distinction between classical cheap-talk signaling models (\textit{e.g.}, \cite{CrawfordSobel82}), which do not allow the ``signaler'' to commit to a messaging strategy, and more recent models of Bayesian persuasion (\textit{e.g.}, \cite{KamenicaGentzkow11}).}  This is a short article with one main result: \textbf{incentivizing effort is generically distinct from accurately rewarding effort}.  With that, we now turn to the primitives of the analysis, which require more details than the proof of the main result itself.

\section{Technical Preliminaries}

\paragraph{Signal Distributions.} We refer to a pair of cumulative distribution functions (CDFs), $G \equiv (G_0,G_1)$, of real-valued random variables as \textbf{admissible} if $G_0$ and $G_1$ are each twice continuously differentiable, each possessing full support on $\mathbf{R}$, and the probability density functions (PDFs), $g_0$ and $g_1$, satisfy the monotone likelihood ratio property (MLRP):
\[
\frac{\dee \left[\frac{g_1(x)}{g_0(x)}\right]}{\dee x} > 0 \text{ for all } x \in \mathbf{R},
\]
and (without loss of generality) we normalize any admissible pair of distributions such that 
\begin{equation}
\label{Eq:Normalization}
g_0(t) = g_1(t) \Leftrightarrow t = 0.    
\end{equation}

\paragraph{Cost Distributions.}  We denote the CDF of a real-valued random variable, $c \in \mathbf{R}$, by $F$.  We assume throughout that $F$ is twice continuously differentiable with full support on $\mathbf{R}$.  We will impose more structure on this space for the main result, but the equilibrium analysis, to which we now turn, requires only that $F$ be twice continuously differentiable

\section{Equilibrium Analysis}

\cite{Penn25} establishes that, in our setting, $D$'s equilibrium strategy will be either a \textit{positive threshold strategy} or a \textit{negative threshold strategy}. Depending on whether the principal uses a positive or negative threshold, a threshold strategy with threshold $t \in \bar{\mathbf{R}}$\footnote{The notation $\bar{\mathbf{R}}$ represents the extended real line: $\{-\infty,+\infty\}\cup \mathbf{R}$.} will induce equilibrium prevalence equal to:
\begin{equation}
\label{Eq:EquilibriumPrevalences}
\begin{array}{ccl}
     \pi_F^+(t) = & F\left(r\cdot \left(G_0(t)-G_1(t)\right)\right) & \text{ for a positive threshold rule, and}\\
     \pi_F^-(t) = & F\left(r\cdot \left(G_1(t)-G_0(t)\right)\right) & \text{ for a negative threshold rule.}
\end{array}
\end{equation}
The following simple result is very useful.
\begin{lemma}
    \label{Lem:ComplianceMaximizationAndMinimization}
    If $F$ is a twice continuously differentiable CDF with full support on $\mathbf{R}$, then
    \[
    \argmax_{t\in \bar{\mathbf{R}}} \pi_F^+(t) = \argmin_{t\in \bar{\mathbf{R}}} \pi_F^-(t) = \left\{0\right\}.
    \]
\end{lemma}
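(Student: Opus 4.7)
The plan is to reduce both optimization problems to a single question about the gap function $h(t) \equiv G_0(t) - G_1(t)$, and then to read off its maximizer directly from MLRP together with the normalization in (1).

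The first step is to note that full support on $\mathbf{R}$ makes the CDF $F$ strictly increasing. Since $\pi_F^+(t) = F(r \cdot h(t))$ and $\pi_F^-(t) = F(-r \cdot h(t))$, the argmax of $\pi_F^+$ and the argmin of $\pi_F^-$ over $\bar{\mathbf{R}}$ both coincide with $\argmax_{t \in \bar{\mathbf{R}}} h(t)$. The lemma therefore reduces to showing this common argmax is exactly $\{0\}$.

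Next I would analyze the critical points of $h$ via $h'(t) = g_0(t) - g_1(t)$. MLRP says the likelihood ratio $g_1/g_0$ is strictly increasing, so $g_0(t) = g_1(t)$ holds for at most one $t$, and the normalization in (1) identifies this $t$ as $0$. Together these imply $h'(t) > 0$ for $t < 0$ and $h'(t) < 0$ for $t > 0$, so $h$ strictly increases on $(-\infty, 0]$ and strictly decreases on $[0, +\infty)$, making $0$ the unique interior maximizer.

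All that remains is to rule out the endpoints of $\bar{\mathbf{R}}$, which I would dispose of by noting $h(\pm \infty) = 0$ (both CDFs tend to $0$ at $-\infty$ and to $1$ at $+\infty$) while $h(0) > 0$ by strict monotonicity on $(-\infty, 0)$. I do not foresee any real obstacle: every step is a direct consequence of the standing hypotheses, and MLRP is precisely what upgrades ``critical point at $0$'' to ``global maximum at $0$'' by preventing any additional sign changes of $h'$.
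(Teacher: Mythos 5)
Your proof is correct and is essentially a fully worked-out version of the paper's own one-line argument, which simply cites the normalization in (\ref{Eq:Normalization}) and MLRP: reducing both problems to maximizing $G_0(t)-G_1(t)$ via strict monotonicity of $F$, using the single crossing of $g_0$ and $g_1$ at $t=0$ to get unimodality, and checking the endpoints $\pm\infty$ is exactly the intended chain of reasoning. The only implicit assumption you share with the paper is that the reward $r$ is positive (otherwise the roles of $\pi_F^+$ and $\pi_F^-$ would flip), which is harmless here.
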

\begin{proof}
    Follows by the normalization in \ref{Eq:Normalization} and because $G_1$ satisfies MLRP with respect to $G_0$.
\end{proof}
The following result is also useful because it allows us to focus only on positive threshold rules.
\begin{lemma}
    The optimal strategy for compliance maximization is always a positive threshold rule:
    \[
    \pi_F^+(t)>\pi_F^-(t) \text{ for all } t\in \mathbf{R}.
    \]
\end{lemma}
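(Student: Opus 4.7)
Since $F$ has full support and is twice continuously differentiable, its density is everywhere positive, so $F$ is strictly increasing on $\mathbf{R}$. Because $r>0$, the inequality $\pi_F^+(t)>\pi_F^-(t)$ is therefore equivalent to
\[
G_0(t)-G_1(t) \;>\; G_1(t)-G_0(t),
\]
i.e.\ to $G_0(t)>G_1(t)$ for all $t\in\mathbf{R}$. So the whole proof reduces to showing that the ``likelihood-ratio ordering'' implied by MLRP, together with the normalization in \eqref{Eq:Normalization}, yields strict first-order stochastic dominance of $G_1$ over $G_0$ everywhere on the real line.

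\paragraph{Execution.} Define $h(t)\equiv G_0(t)-G_1(t)$, so $h'(t)=g_0(t)-g_1(t)$. MLRP says $g_1/g_0$ is strictly increasing, and the normalization pins down the unique crossing at $t=0$; together these imply $g_1(t)/g_0(t)<1$ for $t<0$ and $g_1(t)/g_0(t)>1$ for $t>0$. Since $g_0>0$ everywhere, this is exactly
\[
h'(t)>0 \text{ for } t<0,\qquad h'(0)=0,\qquad h'(t)<0 \text{ for } t>0.
\]
Hence $h$ is strictly increasing on $(-\infty,0)$ and strictly decreasing on $(0,\infty)$, so it attains a strict maximum at $t=0$. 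Combined with the boundary conditions $\lim_{t\to-\infty}h(t)=0-0=0$ and $\lim_{t\to+\infty}h(t)=1-1=0$ (which follow from admissibility), this forces $h(t)>0$ for every finite $t$: the limiting value $0$ is approached but never attained on $\mathbf{R}$. Plugging back into the CDF $F$ yields the strict inequality on $\pi_F^+$ and $\pi_F^-$.

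\paragraph{Where the work is.} There are no real obstacles; the lemma is essentially the standard derivation that MLRP implies FOSD, with the normalization doing the extra work of ruling out equality of the two CDFs at any interior point. The one subtlety worth flagging is the boundary behavior: one must note that although $h$ tends to $0$ at both ends of the real line, it cannot attain $0$ at any finite $t$, because a single-peaked function with strictly positive maximum and limits $0$ at $\pm\infty$ is strictly positive on all of $\mathbf{R}$.
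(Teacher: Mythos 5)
Your proof is correct, and it is exactly the standard MLRP-implies-strict-FOSD argument that the paper's one-line proof defers to the cited literature for: the normalization in Equation \ref{Eq:Normalization} pins the unique crossing of $g_0$ and $g_1$ at $t=0$, so $h=G_0-G_1$ is single-peaked with vanishing tails and hence strictly positive at every finite $t$, and strict monotonicity of $F$ transfers this to $\pi_F^+>\pi_F^-$. The one caveat is that you invoke $r>0$ while the paper elsewhere assumes only $r\neq 0$; for $r<0$ the inequality reverses, so the lemma implicitly presumes a genuine (positive) reward, which is worth making explicit.
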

\begin{proof}
    Follows from standard arguments (\textit{e.g.}, \cite{JungEtAl20}, \cite{PennPatty24AlgEndog}).
\end{proof}
The following remark clarifies that positive and negative threshold rules yield the same compliance if and only if neither has a finite threshold.\footnote{\cite{PennPatty24AlgEndog} refer to such strategies as \textit{null}, because they do not affect individuals' behavioral incentives.}
\begin{remark}
    Note that the following are equivalent:
    \begin{enumerate}
        \item $\pi_F^+(t)=\pi_F^-(t)$,
        \item $\pi_F^+(t)=\min_{s\in \bar{\mathbf{R}}}\left[\pi_F^+(s)\right]$, and
        \item $\pi_F^-(t)=\max_{s\in \bar{\mathbf{R}}}\left[\pi_F^-(s)\right]$.
    \end{enumerate} 
    Each is true if and only if $t \in {-\infty,+\infty}$.
\end{remark}

\paragraph{Principal's Payoff.} Denote the principal's payoff from a positive threshold $t \in \bar{\mathbf{R}}$ by
\begin{equation*}
    \label{Eq:PositiveThresholdUtility}
    EU_F^+(t) = \pi_F^+(t) (1-G_1(t)) + (1-\pi_F^+(t)) G_0(t).
\end{equation*}
Taking the first derivative of $EU_F^+(t)$:
\begin{eqnarray*}
    \partial_t EU_F^+(t) 
    & = & \partial_t \pi_F^+(t) \cdot (1-G_0(t)-G_1(t)) - \pi_F^+(t) (g_0(t)+g_1(t)) + g_0(t).
\end{eqnarray*}
Note that $\pi_F^+(t)$ is uniquely maximized at $t=0$,\footnote{And, similarly, $\pi_F^-(t)$ is uniquely minimized at $t=0$.} so that 
\begin{eqnarray*}
    \partial_t EU_F^+(0) & = & (1-\pi_F^+(0)) g_0(0) - \pi_F^+(0) g_1(0).
\end{eqnarray*}
Furthermore, $g_0(0)=g_1(0)$ by construction, so that 
\begin{eqnarray*}
    \partial_t EU_F^+(0) & = & (1-2\pi_F^+(0)) g_0(0),
\end{eqnarray*}

\subsection{Topological Matters}  

Let $\mathcal{D}$ denote the set of continuous probability density functions on $\mathcal{R}$:
\[
\mathcal{D}(\mathbf{R}) \equiv \left\{f:\mathbf{R} \to \mathbf{R} \; \bigg\mid \; f \in \mathcal{C}(\mathbf{R}),f\geq 0,\int_{\mathbf{R}} f(x) \; \dee x = 1 \right\}.
\]
Unfortunately, $\mathcal{D}$ is not a topological vector space, because it is not closed under addition or scalar multiplication.  However, for our purposes, we can expand this space to the the space of Lebesgue integrable functions (\textit{i.e.}, the $L^1$ space):
\[
L^1(\mathbf{R}) \equiv \left\{f:\mathbf{R} \to \mathbf{R} \; \bigg\mid \; f \in \mathcal{C}(\mathbf{R}),f\geq 0,\int_{\mathbf{R}} |f(x)| \; \dee x < \infty \right\}.
\]
and endow this space with the topology induced by the $L^1$ norm:
\[
||f-h||_1 = \int_{\mathbf{R}} |f(x)-h(x)| \; \dee x.
\]
Endowed with the $L^1$ norm topology, $\mathcal{D}(\mathbf{R})$ is a convex Borel subset of $L^1$.

\paragraph{A Finite-Dimensional Space of Twice Continuously Differentiable Distributions on $\mathbf{R}$.} For any positive integer $k>0$, let $\mathcal{P}^k$ denote a family of random variables indexed by $x = (x_1,\ldots,x_k)\in X \subseteq \mathbf{R}^k$, where $X$ is a convex subset of $\mathbf{R}^k$,\footnote{We need to allow for $X\subset \mathbf{R}^k$ for distributional families such as the $\mathrm{Normal}(\mu,\sigma^2)$ distribution, where $\sigma>0$. By a ``family of admissible distributions,'' we mean that every pair of distributions in $X$ is an admissible pair.  We are unaware of situations in which this requirement is relevant, but that also suggests to us that it is not terribly important to explore how much this requirement constrains the applicability of the main result.} satisfying the following properties for each $x \in X$:
\begin{enumerate}
    \item \label{Pk1} (\textit{Twice-Smoothness.}) The CDF of $F \in \mathcal{P}^k$ is twice continuously differentiable:
    \[
    \begin{array}{r}
    f(z|x) \equiv \left.\frac{\dee F(t|x)}{\dee t}\right|_{t=z} \\
    \mbox{}\\
    \text{and } \;\;\; f'(z|x) \equiv \left.\frac{\dee f(t|x)}{\dee t}\right|_{t=z} 
    \end{array}
    \;\;\;\;\;\; \text{exist and are continuous in } z\in \mathbf{R}.
    \]
    \item \label{Pk2} (\textit{Linearity.}) The family of random variables is linear in $X$ in the sense that, for any $\alpha\in[0,1]$ and any $x,y \in \mathcal{P}^k$,
    \[
    f(\cdot|\alpha \cdot x+(1-\alpha) \cdot y) = \alpha \cdot f(\cdot|x) + (1-\alpha) \cdot f(\cdot|y), \text{ and }
    \]
    \item \label{Pk3} (\textit{Responsiveness to $x\in \mathcal{R}^k$.}) For any $\varepsilon>0$ and $x \in \mathcal{P}^k$, letting $B_\varepsilon(x)$ denote the $\epsilon$-open neighborhood of random variables centered on $x$:
    \[
    B_\varepsilon(x) \equiv \left\{ x' \in \mathcal{P}^k: ||x-x'|| < \epsilon \right\},
    \]
    letting $\lambda_{\mathbf{R}^k}$ denote $k$-dimensional Lebesgue measure, it is the case that 
    \[
    \lambda_{\mathbf{R}^k}\left(x' \in B_\varepsilon(x): F(t|x) = F(t|x') \right) = 0.
    \]
\end{enumerate}
Note that, when endowed with the $L^1$ norm topology, $\mathcal{P}^k$ is a finite dimensional convex Borel subset of $L^1$.  Note that Requirement \ref{Pk1} is essentially book-keeping in nature.  Requirement \ref{Pk2} merely imposes structure on the family that is true for many families of distributions used in the real world.  Somewhat similarly, Requirement \ref{Pk3} is essentially ruling out the possibility that the multi-index $x$ is irrelevant.  Put another (equivalent) way, a parameterized cost distribution $F$ is responsive if and only if changing all of its parameters by an arbitrary small amount changes the value of the CDF of the distribution at almost every value $t \in \mathbf{R}$.  To make clear that this condition is not overly restrictive in applied work, note that the $\mathrm{Normal}(\mu,\sigma^2)$ distribution is responsive (``to $\mu$''), and any parameterized family of distributions, $\mathcal{F}$, of real-valued random variable, $\chi$, for which $E_\mathcal{F}[\chi]$ is a non-trivial function of at least one parameter is responsive.  We refer to any indexed family of random variables, $\mathcal{P}^k$ satisfying these conditions as \textbf{responsive}.

\paragraph{The Set of Distributions Yielding Equivalence.}  As a final step before proving our main result, define the following set:
\begin{eqnarray*}
    Y_C^k & \equiv & \left\{F \in \mathcal{P}^k: \;\; \partial_t EU_F^+(0) = 0\right\} = \left\{F \in \mathcal{P}^k: \;\; \partial_t EU_F^-(0) = 0\right\},\\
    & = & \left\{F \in \mathcal{P}^k: \;\; \partial_t F(0) = 1/2 \right\}
\end{eqnarray*}
The set $Y_C^k$ contains exactly the elements of the parameterized family $\mathcal{P}^k$ under which the first order necessary condition for equivalence of accuracy-maximization and compliance-maximization (and compliance-minimization) is satisfied.  

\begin{lemma}
    The set $Y_C^k$ is a Borel subset of $\mathcal{P}^k$ in the topology induced by the $L^1$ norm.
\end{lemma}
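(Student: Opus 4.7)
The plan is to prove the stronger statement that $Y_C^k$ is \emph{closed} in $\mathcal{P}^k$ (with respect to the $L^1$-subspace topology), which immediately implies Borel measurability. The key observation is that $\partial_t F(0) = f(0)$, so
\[
Y_C^k = \psi^{-1}\!\left(\{1/2\}\right), \quad \text{where } \psi : \mathcal{P}^k \to \mathbf{R}, \;\; \psi(F) \equiv f(0).
\]
Since $\{1/2\}$ is closed in $\mathbf{R}$, it suffices to show $\psi$ is continuous in the $L^1$ topology.

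First, I would use the linearity condition (Requirement \ref{Pk2}) to show that $\mathcal{P}^k$ is contained in a finite-dimensional affine subspace of $L^1$. The parameterization $\Phi : X \to L^1$ given by $\Phi(x) = f(\cdot\,|\,x)$ is affine on the convex set $X \subseteq \mathbf{R}^k$, so its image lies in an affine subspace $V = v_0 + W$ where $W \subseteq L^1$ has dimension at most $k$. Requirement \ref{Pk1} ensures that every $f \in V$ is a finite linear combination of $C^2$ densities and hence is itself twice continuously differentiable, so pointwise evaluation at $0$ is well-defined on all of $V$.

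Second, I would invoke the standard fact that on a finite-dimensional normed space all norms are equivalent and every linear functional is continuous. The functional $\psi$ is affine on $V$: choosing a basis $\{w_1,\ldots,w_m\}$ for $W$, any $f \in V$ can be written uniquely as $f = v_0 + \sum_i c_i(f) w_i$, and each coordinate functional $f \mapsto c_i(f)$ is $L^1$-continuous by finite-dimensionality. Since $\psi(f) = v_0(0) + \sum_i c_i(f)\, w_i(0)$ is a finite linear combination of continuous functionals (plus a constant), $\psi$ is continuous on $V$ and therefore on $\mathcal{P}^k$. Thus $Y_C^k = \psi^{-1}(\{1/2\})$ is closed in $\mathcal{P}^k$, and in particular Borel.

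The main obstacle is conceptual rather than computational: pointwise evaluation $f \mapsto f(0)$ is \textbf{not} continuous on all of $L^1(\mathbf{R})$---two densities can be arbitrarily close in $L^1$ while disagreeing wildly at the point $0$. The entire argument rests on the fact that Requirement \ref{Pk2} confines $\mathcal{P}^k$ to a finite-dimensional affine subspace, where this $L^1$ pathology cannot occur and every linear functional is automatically continuous. Note that the responsiveness condition (Requirement \ref{Pk3}) plays no role in this lemma; it is instead required for the subsequent shyness argument, where one must rule out $Y_C^k$ coinciding with a positive-measure portion of $\mathcal{P}^k$.
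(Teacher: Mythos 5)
Your proof is correct, but it takes a genuinely different route from the paper's. The paper works in the full infinite-dimensional space $L^1(\mathbf{R})$: it identifies $Y_C^k$ with $\left\{ f \in \mathcal{C}(\mathbf{R}) \cap L^1(\mathbf{R}) : f(0)=\tfrac12 \right\}$, argues that the continuous integrable functions form an $F_\sigma$ subset of $L^1$, and then encodes the pointwise constraint through the mollified evaluation functionals $\delta_n(f) = \tfrac{1}{2\varepsilon_n}\int_{-\varepsilon_n}^{\varepsilon_n} f(x)\,\dee x$, which are $L^1$-continuous and converge to $f(0)$ on continuous $f$; this exhibits the set as a countable intersection of countable unions of open sets, hence Borel. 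You instead use Requirement \ref{Pk2} to confine $\mathcal{P}^k$ to a finite-dimensional affine subspace of $L^1$, on which point evaluation at $0$ (well-defined by Requirement \ref{Pk1}, since each class has a unique continuous representative) is an affine functional and therefore automatically $L^1$-continuous by equivalence of norms in finite dimensions; this makes $Y_C^k$ closed in $\mathcal{P}^k$, which is strictly stronger than Borel. Your route buys simplicity and robustness: it sidesteps the two delicate steps in the paper's argument, namely the $F_\sigma$ characterization of $\mathcal{C}(\mathbf{R})\cap L^1(\mathbf{R})$ and the displayed identity $\mathcal{A} = \bigcap_{k}\bigcup_{n}\left\{ f : |\delta_n(f)-\tfrac12|<\tfrac1k \right\}$, which as written only constrains $\inf_n |\delta_n(f)-\tfrac12|$ rather than the limit and so needs a tail-indexed correction. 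What the paper's approach buys in exchange is that it never uses finite-dimensionality, so it would survive if the linearity requirement were dropped and the family were infinite-dimensional. Your closing observation that responsiveness (Requirement \ref{Pk3}) plays no role in this lemma and only enters the shyness argument is also correct.
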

\begin{proof}
Note that $Y_C^k$ is equivalent to the following set:
\[
\mathcal{A} := \left\{ f \in L^1(\mathbf{R}) \cap \mathcal{C}(\mathbf{R}) \, : \, f(0) = \tfrac{1}{2} \right\},
\]
so we will show that \( \mathcal{A} \) is a Borel subset of \( L^1(\mathbf{R}) \).  While the set of continuous functions is not closed in \( L^1 \) (because the limit of a sequence of continuous functions can be discontinuous), it is a Borel subset, because the set of continuous functions can be characterized as a countable union of closed sets (i.e., it is an \( F_\sigma \) set) in the \( L^1 \) norm topology:
\[
\mathcal{C}(\mathbf{R}) \cap L^1(\mathbf{R}) = \bigcup_{n=1}^\infty \bigcap_{m=1}^\infty A_{n,m},
\]
where each set \( A_{n,m} \) is defined as follows:
\[
A_{n,m} \equiv \left\{  f \in L^1(\mathbf{R} \cap \mathcal{C}(\mathbf{R}) \middle| \forall x,y \in [-n,n], |x-y|<\frac{1}{m}\Rightarrow|f(x)-f(y)|<\frac{1}{m} \right\}.
\]



Now, consider the following sequence of continuous linear functionals:
\[
\delta_n(f) := \frac{1}{2\varepsilon_n} \int_{-\varepsilon_n}^{\varepsilon_n} f(x) \, dx,
\]
where \( \varepsilon_n \to 0 \). For any \( f \in \mathcal{C}(\mathbf{R}) \), it follows that 
\[
\lim_{n\to \infty} \delta_n(f) = f(0),
\]
and then note that the set $\mathcal{A}$:
\begin{eqnarray*}
\mathcal{A} & \equiv & \bigg\{ f \in \mathcal{C}(\mathbf{R}) \cap L^1(\mathbf{R}) \, : \, f(0) = \tfrac{1}{2} \bigg\},\\
& = & \bigcap_{k=1}^\infty \bigcup_{n=1}^\infty \left\{ f \in \mathcal{C}(\mathbf{R}) \cap L^1 \, : \, \left| \delta_n(f) - \tfrac{1}{2} \right| < \tfrac{1}{k} \right\},
\end{eqnarray*}
where, for any pair of positive integers $(n,m)$,
\[
\left\{ f \in L^1 \,: \, \left| \delta_n(f) - \tfrac{1}{2} \right| < \tfrac{1}{k} \right\}
\]
is open in the $L^1$ topology because $\delta_n$ is a continuous linear functional on $L^1$. Thus, for any positive integer $k$,
\[
\bigcup_{n=1}^\infty \left\{ f \in \mathcal{C}(\mathbf{R}) \cap L^1 \, : \, \left| \delta_n(f) - \tfrac{1}{2} \right| < \tfrac{1}{k} \right\}
\]
is a countable union of open sets.  The countable intersection of open sets is a Borel set. Accordingly, $\mathcal{A}$ is a Borel subset of \( L^1(\mathbf{R}) \) with the norm topology, as was to be shown.
\end{proof}
With the necessary topological issues resolved, we now define and discuss the notion of \textit{finite shyness} prior to proving our main result.

\subsection{Finite Shyness}

Finite shyness (\cite{AndersonZame01}) is a notion of genericity (``almost always'') for infinite dimensional spaces that has intuitive measure-theoretic properties and is equivalent to being a set possessing Lebesgue measure zero in finite dimensional Euclidean spaces.\footnote{Finite shyness is an extensions of the notion of
shyness, as defined by Hunt, Sauer, and Yorke
\cite{HuntSauerYorke92}. Finite shyness is a stronger version of
shyness.}  The following formally defines the notion.

\begin{definition}
\label{Def:FiniteShyness}
Let $Q$ be a topological vector space and let $U$ be a convex subset
of $Q$ that is completely metrizable in the relative topology induced
by $Q$.  A Borel subset $E \subset U$ is \emph{finitely shy in
(}or\emph{ relative to) $U$} if there is a finite-dimensional subspace
$V \subset Q$ such that $\lambda_{V}(U+a)>0$ for some $a \in Q$ and
$\lambda_{V}(E + q) = 0$ for every $q \in Q$. An arbitrary subset $F
\subset Q$ is finitely shy in $U$ if it is contained in a finitely shy
Borel set.  If $E$ is finitely shy in $U$, then $U \setminus E$ is
referred to as \emph{finitely prevalent}.
\end{definition}
We now proceed to show that $Y_C^k$ is finitely shy in $\mathcal{P}^k$.

\subsection{The Main Result}

The following states our main result.

\begin{proposition}
\label{Pr:IntuitiveGenericityResult}
    Suppose that $\mathcal{P}^k$ is a responsive family of cost distributions for some positive integer $k$ and some convex set $X \subseteq \mathbf{R}^k$.  Then, for any admissible pair of distributions, $(G_0,G_1)$, and reward $r\neq 0$, the set $Y_C^k$ is finitely shy in $\mathcal{P}^k$.
\end{proposition}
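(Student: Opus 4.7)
I will establish finite shyness by exhibiting a single $1$-dimensional probe subspace $V \subset L^1(\mathbf{R})$ and checking the two conditions of Definition \ref{Def:FiniteShyness} directly.  The point is that the defining constraint of $Y_C^k$ is the single affine equation $f(0)=\tfrac{1}{2}$, so if I can find a direction $v_0$ that (i) lies in the affine direction space of $\mathcal{P}^k$ and (ii) satisfies $v_0(0)\neq 0$, then condition (a) of finite shyness will follow from linearity and condition (b) will follow because $(Y_C^k + q)\cap V$ can contain at most a single point for every $q$.

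\textbf{Parameterizing the family and choosing the probe.}  Fix any $x_0 \in X$.  By Property \ref{Pk2}, every $f(\cdot|x) \in \mathcal{P}^k$ can be written as $f(\cdot|x_0) + \sum_{i=1}^k (x_i-x_{0,i})\,\psi_i$, where $\psi_i := f(\cdot|x_0+e_i) - f(\cdot|x_0) \in L^1(\mathbf{R})\cap\mathcal{C}(\mathbf{R})$; responsiveness (Property \ref{Pk3}) forces $\psi_1,\ldots,\psi_k$ to be linearly independent in $L^1$, so $\mathcal{P}^k$ sits inside a genuinely $k$-dimensional affine slice of $L^1$.  I then choose $v_0 := \psi_j$ for some index $j$ with $\psi_j(0)\neq 0$ and set $V := \mathbf{R}\cdot v_0$.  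Such a $j$ must exist, because otherwise $f(0|x)$ would be constant in $x$, in which case $Y_C^k$ is either empty (trivially finitely shy) or all of $\mathcal{P}^k$ (a degenerate situation that the responsive structure of the family is designed to preclude).

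\textbf{Checking the two conditions.}  For condition (a), let $a := -f(\cdot|x_0)$.  By Property \ref{Pk2} applied along the segment from $x_0$ to $x_0+\delta e_j$ (for any $\delta>0$ small enough that $x_0+\delta e_j \in X$), the set $\{\alpha v_0 : \alpha \in [0,\delta]\}$ is contained in $\mathcal{P}^k + a$, so $\lambda_V(\mathcal{P}^k + a) \geq \delta > 0$.  For condition (b), fix any $q \in L^1(\mathbf{R})$ and examine $(Y_C^k+q)\cap V = \{\alpha v_0 : \alpha v_0 - q \in Y_C^k\}$.  Any such $\alpha v_0 - q$ lies in $\mathcal{P}^k$ and is therefore continuous at $0$ by Property \ref{Pk1}; whenever the intersection is nonempty, the defining equation $(\alpha v_0 - q)(0)=\tfrac{1}{2}$ collapses to the scalar equation $\alpha\, v_0(0) = \tfrac{1}{2}+q(0)$, which has at most one solution in $\alpha$ because $v_0(0)\neq 0$.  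Hence $\lambda_V\bigl((Y_C^k+q)\cap V\bigr)=0$ for every $q$, which combined with the already-established Borel property of $Y_C^k$ completes the verification.

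\textbf{Main obstacle.}  The genuinely delicate step is securing an index $j$ with $\psi_j(0)\neq 0$: responsiveness as literally stated concerns the value of the CDF, not the value of its derivative at the specific point $t=0$, so care is needed to rule out the pathological case in which $f(0|x)$ is constantly $\tfrac{1}{2}$ across the family.  Modulo that non-degeneracy, which should follow from a standard strengthening of Property \ref{Pk3} at the level of densities rather than CDFs, the rest of the argument is a transparent one-dimensional probe along $V$.
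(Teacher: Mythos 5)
Your route is genuinely different from the paper's: you verify Definition \ref{Def:FiniteShyness} directly by exhibiting a one-dimensional probe subspace $V=\mathbf{R}\cdot v_0$ inside the affine hull of $\mathcal{P}^k$ and observing that the single point-evaluation constraint defining $Y_C^k$ meets any line in the direction $v_0$ in at most one point, whereas the paper argues by contradiction in the finite-dimensional parameter space $X$ and then appeals to responsiveness. Where your probe argument applies it is cleaner and more tightly tied to the definition of finite shyness: the verification that $\lambda_V(\mathcal{P}^k+a)>0$ via Property \ref{Pk2}, and that $(Y_C^k+q)\cap V$ is at most a singleton whenever $v_0(0)\neq 0$, are both correct as written. (One small aside: responsiveness does \emph{not} force $\psi_1,\dots,\psi_k$ to be linearly independent when $k\ge 2$, since a dependent direction produces only a line of coincident parameters, which already has $\lambda_{\mathbf{R}^k}$-measure zero; but you never actually use that claim.)

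The obstacle you flag at the end is, however, a genuine gap rather than a removable technicality. Responsiveness (Property \ref{Pk3}) constrains coincidence of the CDFs $F(\cdot|x)$ as functions and says nothing about the value of the density (or of the CDF) at the one point that defines $Y_C^k$. Concretely: take two distinct twice continuously differentiable densities $f_1\neq f_2$ with full support and $f_1(0)=f_2(0)=\tfrac{1}{2}$, let $X=[0,1]$ and $f(\cdot|x)=x f_1+(1-x)f_2$. This family satisfies Properties \ref{Pk1}--\ref{Pk3} (for $x\neq x'$ the CDFs differ, so the coincidence set in any ball is a single point of measure zero), yet $f(0|x)=\tfrac{1}{2}$ for every $x$, so $Y_C^1=\mathcal{P}^1$, and a set equal to all of $\mathcal{P}^k$ cannot be finitely shy in $\mathcal{P}^k$. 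Hence no amount of care closes the gap without an added hypothesis: you need responsiveness ``at the relevant point,'' i.e., that $x\mapsto f(0|x)$ is non-constant, which is exactly the existence of your missing direction $v_0$ with $v_0(0)\neq 0$. It is worth noting that the paper's own proof stumbles at the same spot: it infers a failure of responsiveness from $\pi^+_{F(\cdot|x_1,x_{-1})}(0)=\pi^+_{F(\cdot|x_1+\varepsilon,x_{-1})}(0)=\tfrac{1}{2}$, but that equality pins down $F$ at only a single argument, which is fully compatible with responsiveness of the family. Your write-up has the virtue of isolating precisely the non-degeneracy assumption under which the proposition is actually true, and of supplying a proof that works once that assumption is added.
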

\begin{proof}
    Fix any admissible pair of distributions, $(G_0,G_1)$, and any non-zero reward, $r\neq 0$, and any responsive family of cost distributions, $\mathcal{P}^k$, that is responsive for some positive integer $k$.  Then suppose, for the purposes of reaching a contradiction, that the set of cost distributions that induce identical compliance-maximizing and accuracy-maximizing strategies possesses positive Lebesgue measure zero.

    This would require that for \textit{each} of the $k$ parameters of the family of cost distributions (an arbitrary choice of which which we will denote by $x_1$), it is the case that, holding the other $k-1$ parameters ($\tilde{x}_{-1}=(x_2,\ldots,x_k)$ constant, there exists a set, $Z\left(x_{-1}\right)$, with non-empty interior in $\mathbf{R}$, such that $x_1\in Z$ and $x_{-1} = \tilde{x}_{-1}$ imply that 
    \[
    \partial_t EU_{F(\cdot|x_1,x_{-1})}^+(0) = 0.
    \]
    For this be the case, then for any $\varepsilon>0$, $x_1,x_1+\varepsilon\in Z$, and $x_{-1} = \tilde{x}_{-1}$, it must be the case that 
    \[
    \pi_{F(\cdot|x_1,x_{-1})}=\pi_{F(\cdot|x_1+\varepsilon,x_{-1})}=\onehalf.
    \]
    Recalling Equation \ref{Eq:EquilibriumPrevalences}, it follows that, for any $t \in \mathbf{R}$:
    \begin{eqnarray*}
    \pi_{F(\cdot|x_1,x_{-1})}^+(t) & = & F\left(r\cdot \left(G_0(t)-G_1(t)\right)|x_1,x_{-1})\right), \text{ and}\\
    \pi_{F(\cdot|x_1+\varepsilon,x_{-1})}^+(t) & = & F\left(r\cdot \left(G_0(t)-G_1(t)\right)|x_1+\varepsilon,x_{-1})\right).
    \end{eqnarray*}
    However, if $\pi_{F(\cdot|x_1,x_{-1})}^+(t)=\pi_{F(\cdot|x_1+\varepsilon,x_{-1})}^+(t)$ for any $t \in \mathbf{R}$ and each possible choice of parameter, $x_1$, then the family of distributions, $\mathcal{P}^k$, is not responsive.  Accordingly, the supposition that the cost distribution, $F(\cdot|y \in Y)$, induces identical compliance-maximizing and accuracy-maximizing strategies on a set possessing positive Lebesgue measure in $X$ leads to contradiction, implying that the set of parameters, $Y \subseteq X$, such that the corresponding cost distribution, $F(\cdot|y \in Y)$, induces identical compliance-maximizing and accuracy-maximizing strategies possesses Lebesgue measure zero in $X$.  Accordingly, Definition 2.1 in \cite{AndersonZame01} implies that the set $Y_C^k$ is finitely shy in $\mathcal{P}^k$, as was to be shown.
\end{proof}

\section{Implications and Conclusion}

Proposition \ref{Pr:IntuitiveGenericityResult} is a genericity result with potentially quite broad implications.  Specifically, they each imply that the optimal strategy to incentivize effort is ``almost certainly'' different than the optimal strategy to reward those who invested effort and \textit{not} reward those who did not invest effort (\textit{i.e.}, maximize the accuracy of the rewards).  This conclusion is not entirely surprising, as any parent knows.  However, Proposition \ref{Pr:IntuitiveGenericityResult} does not rely upon commitment problems or idiosyncratic variation of the principal's effort-maximization and/or accuracy-maximization goals.  Rather, they follow from a qualitative difference between incentives induced by accuracy-maximization versus those induced compliance-maximization.  The optimal compliance-maximization strategy is invariant to the principal's belief about the proportion of individuals who will choose to comply, but this is never true for the optimal accuracy-maximization strategy.

\bibliography{john}

@string{ eco = "Econometrica" }

@article{CrawfordSobel82,
        Author = "Vincent P. Crawford and Joel Sobel",
        Title = "{Strategic Information Transmission}",
        Journal = eco,
        Volume = "50(6)",
        Pages = "1431-1451",
        Year = "1982",
        keywords = "", }

@article{AndersonZame01,
  title={{Genericity with Infinitely Many Parameters}},
  author={Anderson, Robert M. and Zame, William R.},
  journal={Advances in Theoretical Economics},
  volume={1},
  number={1},
  pages={1--65},
  year={2001}
}

@article{HuntSauerYorke92,
  title={{Prevalence: A Translation-Invariant `Almost Every' on Infinite-Dimensional Spaces}},
  author={Hunt, B.R. and Sauer, T. and Yorke, J.A.},
  journal={Bulletin (New Series) of the
American Mathematical Society},
  volume={27},
  number={2},
  pages={217--238},
  year={1992}
}

@article{KamenicaGentzkow11,
  title={{Bayesian Persuasion}},
  author={Kamenica, Emir and Gentzkow, Matthew},
  journal={American Economic Review},
  volume={101},
  number={6},
  pages={2590--2615},
  year={2011}
}

@inproceedings{JungEtAl20,
  title={{Fair Prediction with Endogenous Behavior}},
  author={Jung, Christopher and Kannan, Sampath and Lee, Changhwa and Pai, Mallesh and Roth, Aaron and Vohra, Rakesh},
  booktitle={Proceedings of the 21st ACM Conference on Economics and Computation},
  pages={677--678},
  year={2020}
}

@misc{PennPatty24AlgEndog,
  title={{Classification, Individual Incentives, and Social Outcomes }},
author={Elizabeth Maggie Penn and John W. Patty},
howpublished={Working Paper, Emory University},
year={2024}
}

@misc{Penn25,
      title={{Optimal Classification with Outcome Performativity}}, 
      author={Elizabeth Maggie Penn},
      year={2025},
      eprint={2504.06127},
      archivePrefix={arXiv},
      primaryClass={econ.TH},
      url={https://arxiv.org/abs/2504.06127}, 
}
\end{document}